\documentclass[a4paper, conference]{IEEEtran}
\usepackage[left=1.62cm,right=1.62cm,top=1.9cm,bottom=4.5cm]{geometry}
\setlength{\columnsep}{0.25in}
\usepackage{amsthm}
\usepackage{amsfonts}
\usepackage{amssymb}
\usepackage{mathrsfs}
\usepackage{mathtools}
\usepackage{float}
\usepackage[pdftex]{graphicx}
\usepackage{amsmath}
\usepackage{color}
\usepackage{multirow,multicol}
\usepackage{graphicx}
\usepackage{times}
\usepackage{textcomp}
\usepackage{verbatim}
\usepackage[table]{xcolor}
\usepackage{balance}
\usepackage{lipsum}
\usepackage[inline]{enumitem}
\usepackage[caption=false,font=footnotesize]{subfig}
\usepackage{cite}
\usepackage{algpseudocode,algorithm}
\usepackage{hyperref}
\usepackage{setspace,epstopdf}
\epstopdfsetup{ 
	suffix=,
}
\usepackage[table]{xcolor}
\definecolor{color1}{RGB}{199,209,232}
\definecolor{color2}{RGB}{230,231,233}

\DeclareMathOperator*{\argmax}{argmax} 
\DeclareMathOperator*{\minimize}{minimize} 
\DeclareMathOperator*{\subjectto}{subject\hspace{3pt} to:\hspace{3pt}} 
\newtheorem{theorem}{Theorem}


\begin{document}
	
	\title{Near-field Hybrid Beamforming for Terahertz-band Integrated Sensing and Communications}
	
	\author{
		\IEEEauthorblockA{Ahmet M. Elbir$^{\dag,+}$,  Abdulkadir Celik$^{+}$ and Ahmed M. Eltawil$^{+}$}
		\IEEEauthorblockA{
			${\dag}$University of Luxembourg, Luxembourg \\
			$+$King Abdullah University of Science and Technology, Saudi Arabia}
		\IEEEauthorblockA{E-mail:\texttt{ ahmetmelbir@ieee.org, abdulkadir.celik@kaust.edu.sa, ahmed.eltawil@kaust.edu.sa} }
	}
	
	\maketitle


	
	%
	%
	%
	%
	%
	%
	%
	
	\begin{abstract}
		Terahertz (THz) band communications and integrated sensing and communications (ISAC) are two main facets of the sixth generation wireless networks. In order to compensate the severe attenuation, the THz wireless systems employ large arrays, wherein the near-field beam-squint severely degrades the beamforming accuracy. Contrary to prior works that examine only either narrowband ISAC beamforming or far-field models, we introduce an alternating optimization technique for hybrid beamforming design in near-field THz-ISAC scenario. We also propose an efficient approach to compensate near-field beam-squint via baseband beamformers. Via numerical simulations, we show that the proposed approach achieves satisfactory spectral efficiency performance while accurately estimating the near-field beamformers and mitigating the beam-squint without additional hardware components. 
		
	\end{abstract}

	\begin{IEEEkeywords}
		Integrated sensing and communications, massive MIMO,  terahertz, near-field, beamforming
	\end{IEEEkeywords}

	\section{Introduction}
	\label{sec:Introduciton}
	\IEEEPARstart{I}{ntegrated} sensing and communications (ISAC) has emerged as one of the pivotal technologies of future sixth generation (6G) wireless networks, enabling synergistic access to the scarce radio spectrum on an integrated hardware platform ~\cite{mishra2019toward,elbir2022Aug_THz_ISAC}. In particular, as the allocation of the spectrum beyond 100 GHz is underway, specifically in the terahertz (THz) band, ISAC is currently witnessing frantic research endeavors to simultaneously achieve high-resolution sensing and ultrahigh-speed communications system architecture at the THz frequencies~\cite{elbir2021JointRadarComm,elbir2022Aug_THz_ISAC}.

	Signal processing  at THz-band confronts  multiple impediments, such as severe path loss, limited transmission distance, and \textit{beam-squint}. To surmount these challenges at reduced hardware costs, hybrid analog and digital beamforming architectures are employed in a massive multiple-input multiple-output (MIMO) array configuration~\cite{heath2016overview,elbir2022Nov_Beamforming_SPM}. For higher spectral efficiency (SE) and lower complexity, massive MIMO systems employ wideband signal processing, wherein subcarrier-dependent (SD) baseband and subcarrier-independent (SI) analog beamformers are adopted. In particular, the weights of the analog beamformers are subject to a single (sub-)carrier frequency~\cite{alkhateeb2016frequencySelective}. Therefore, the beam generated across the subcarriers points towards disparate directions, engendering {beam-squint} phenomenon~\cite{elbir_THZ_CE_ArrayPerturbation_Elbir2022Aug,beamSquint_FeiFei_Wang2019Oct}.  Compared to millimeter-wave (mm-Wave) frequencies, beam-squint's ramifications are more acute in THz massive MIMO  because of wider system bandwidths in the latter~\cite{beamSquint_FeiFei_Wang2019Oct,elbir_BSA_OMP_THZ_CE_Elbir2023Feb}. 
	As such, addressing beam-squint is imperative for ensuring reliable system performance. Existing techniques to compensate for the impact of beam-squint mostly employ additional hardware components, e.g., time-delayer (TD) networks~\cite{trueTimeDelayBeamSquint,beamSquint_FeiFei_Wang2019Oct} and SD phase shifter networks~\cite{beamSquintAwareHB_SD_You2022Aug} to virtually realize SD analog beamformers. 
	However, these approaches are inefficient in terms of cost and power~\cite{elbir2022Aug_THz_ISAC}. It merits notingthat beam-squint compensation does not necessitate additional hardware components for estimation of the communications channel and radar target direction-of-arrival, which can be handled in the digital domain, wherein the generation of SD analog beamformers is possible. Nevertheless, supplementary (analog) hardware is required for hybrid (analog/digital) beamformer design~\cite{elbir2022Nov_Beamforming_SPM,elbir2023Mar_ISAC_SPIM}. 
	
	Beside beam-squint, another formidable challenge in THz-band signal processing is short-transmission distance, which may cause the signal wavefront at the receive to become spherical in near-field (see, e.g., Fig.~\ref{fig_NF_ISAC}). In particular, the plane wavefront is spherical in the near-field when the transmission range is shorter than the Fraunhofer distance~\cite{nf_primer_Bjornson2021Oct}. As a result, the beamforming algorithms must accommodate the near-field model, which depends on both direction and range information  for accurate signal processing~\cite{elbir2022Aug_THz_ISAC}. Among the works investigating the near-field signal  model,  \cite{nf_OMP_Dai_Wei2021Nov,nf_mmwave_CE_noBeamSplit_Cui2022Jan,nf_NB2_Zhang2022Nov} consider the near-field scenario,
	while neglecting the effect of beam-squint and focusing solely on mm-Wave scenarios. On the other hand,  several methods have been proposed to compensate the far-field beam-squint for both THz channel estimation~\cite{elbir2022Jul_THz_CE_FL,dovelos_THz_CE_channelEstThz2} and beamforming~\cite{elbir2021JointRadarComm,elbir2022_thz_beamforming_Unified_Elbir2022Sep} applications. Furthermore, near-field THz channel estimation is explored in~\cite{elbir2023Feb_NF_THZ_CE,elbir2023Feb_NF_THZ_CE_ICASSP_NBAOMP}, wherein an orthogonal matching pursuit (OMP)-based approach is proposed. The near-field ISAC scenario is investigated in~\cite{nf_ISAC_1Wang2023Feb}, albeit exclusively for narrowband systems that do not account for the impact of beam-squint. Specifically, \cite{nf_ISAC_1Wang2023Feb} considers a near-field multiple signal classification (MUSIC) algorithm to estimate the direction and ranges of radar targets and communication users. Nevertheless, near-field ISAC hybrid beamforming in the presence of beam-squint remains relatively unexamined.
	
	In this paper, near-field hybrid beamforming approach is proposed for the THz-ISAC scenario. We first introduce the system model for both communications and sensing signal acquisition. Subsequently, the near-field array model and near-field beam-squint are introduced. In order to design the hybrid beamformers, an alternating algorithm is devised. Initially, a dictionary of near-field steering vectors is employed to estimate the analog beamformer. Then, the baseband beamformer and the joint radar-communications (JRC) beamformers are estimated. Finally, we introduce  an efficient approach to compensate beam-squint in the baseband rather than designing SD analog beamformers~\cite{beamSquintAwareHB_SD_You2022Aug} or TD networks~\cite{trueTimeDelayBeamSquint,beamSquint_FeiFei_Wang2019Oct}, which are hardware-inefficient. Specifically, we design a beam-squint-aware (BSA) baseband beamformer by matching the SI hybrid beamformer to the SD one. Therefore, the effect of beam-squint is conveyed from analog domain to the baseband. 
	
	\begin{figure}
		\centering
		{\includegraphics[draft=false,width=\columnwidth]{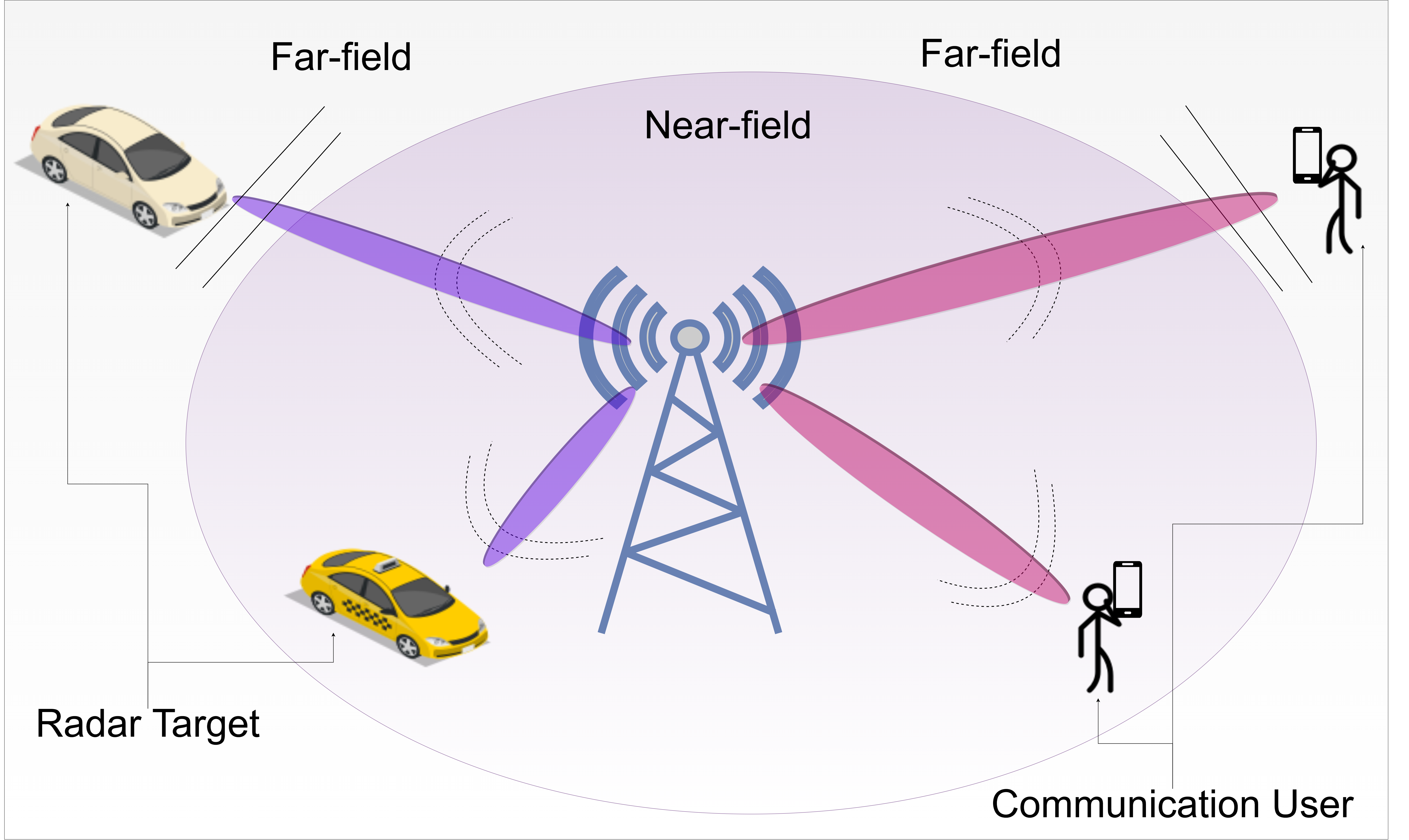} } 
		\caption{Near-field ISAC scenario, wherein received signal wavefront for the communication user/targets in far-field (near-field) is plane-wave (spherical-wave).   	}
		\label{fig_NF_ISAC}
	\end{figure}
	
	\textit{Notation:}  Throughout the paper,  $(\cdot)^\textsf{T}$ and $(\cdot)^{\textsf{H}}$ denote the transpose and conjugate transpose operations, respectively. For a matrix $\mathbf{A}$ and vector $\mathbf{a}$; $[\mathbf{A}]_{i,j}$, $[\mathbf{A}]_k$  and $[\mathbf{a}]_l$ correspond to the $(i,j)$-th entry, $k$-th column and $l$-th entry, respectively.  An $N\times N$ identity matrix is represented by $\mathbf{I}_{N} $. The pulse-shaping function is represented by $\mathrm{sinc}(t) = \frac{\pi t}{t}$. We denote $|| \cdot||_2$ and $|| \cdot||_\mathcal{F}$ as the  $l_2$-norm and Frobenious norm, respectively.

	\section{System Model \& Problem Formulation}
	\label{sec:Model}
	Consider a wideband transmitter design problem in an ISAC scenario with a communication user and $K$ radar targets located in the near-field of the base station (BS) as illustrated in Fig.~\ref{fig_NF_ISAC}. The dual-function BS jointly communicate with the communication user and sense the radar targets via probing signals with $N_\mathrm{T}$ antennas over $M$ subcarriers. The user has $N_\mathrm{R}$ antennas, for which $N_\mathrm{S}$ data symbols $\mathbf{s}[m] = [s_1[m],\cdots,s_{N_\mathrm{S}}[m]]^\textsf{T}\in \mathbb{C}^{N_\mathrm{S}}$ are transmitted, where $\mathbb{E}\{\mathbf{s}[m]\mathbf{s}^\textsf{H}[m]\}=1/{N_\mathrm{S}}\mathbf{I}_{N_\mathrm{S}}$. 
	
	\subsection{Communications Model}
	The BS aims to transmit the data symbol vector $\mathbf{s}[m]\in \mathbb{C}^{N_\mathrm{S}}$ toward the communications user. Thus, the BS first applies the SD baseband beamformer $\mathbf{F}_\mathrm{BB}[m]\in\mathbb{C}^{N_\mathrm{RF}\times N_\mathrm{S}}$. Then,  $M$-point inverse fast Fourier transform (IFFT) is applied to convert the signal to time-domain, and the cyclic prefix (CP) is added. Finally, the SI analog beamformer ${\mathbf{F}}_\mathrm{RF}\in \mathbb{C}^{N_\mathrm{T}\times N_\mathrm{RF}}$ is applied, and the $N_\mathrm{T}\times 1$ transmit signal  becomes
	\begin{align}
	\mathbf{x}[m] = \mathbf{F}_\mathrm{RF}\mathbf{F}_\mathrm{BB}[m]\mathbf{s}[m],
	\end{align}
	where the analog beamformer $\mathbf{F}_\mathrm{RF}$ has constant-modulus constraint, i.e., $|[\mathbf{F}_\mathrm{RF}]_{i,j}| = 1/\sqrt{N_\mathrm{T}}$ for $i = 1,\cdots, N_\mathrm{T}$, $j = 1,\cdots,N_\mathrm{RF}$. Furthermore, we have $\sum_{m=1}^{M}\| \mathbf{F}_\mathrm{RF}\mathbf{F}_\mathrm{BB}[m]\|_\mathcal{F}^2 = MN_\mathrm{S}$ to account for the total power constraint.

	\subsubsection{THz Channel Model} In this study,  we employ Saleh-Valenzuela (S-V) multipath channel model, which is the superposition of received non-LoS (NLoS) paths to model the THz channel~\cite{ummimoTareqOverview,ummimoHBThzSVModel}.	Compared to the mmWave channel, the THz channel involves limited reflected paths and negligible scattering~\cite{ummimoTareqOverview,thz_mmWave_path_Comparison_Yan2020Jun}. For massive MIMO systems, approximately $5$ paths survive at $0.3$ THz compared to approximately $8$ paths at $60$ GHz~\cite{thz_mmWave_path_Comparison_Yan2020Jun}. Especially for outdoor applications, multipath channel models are widely used to represent the THz channel for a more general scenario~\cite{ummimoTareqOverview,thz_mmWave_path_Comparison_Yan2020Jun}.  Hence, in this work, we consider a general scenario, wherein the delay-$\bar{d}$ $N_\mathrm{R}\times N_\mathrm{T}$ MIMO communications channel involving $L$ NLoS paths is given in discrete-time domain as
	\begin{align}
	\label{channelTimeDomain}
	\tilde{\mathbf{H}}(\bar{d}) = \sum_{l = 1}^{L} {\gamma}_l \mathrm{sinc}(\bar{d} - B\tau_l) \mathbf{a}_\mathrm{R}(\theta_l,  \rho_{l}) \mathbf{a}_\mathrm{T}^\textsf{H}(\phi_l,r_l), 
	\end{align}
	where ${\gamma}_l\in \mathbb{C}$ denotes the channel path gain, $B$ represents the system bandwidth and $\tau_l$ is the time delay of the $l$-th path.  $\theta_l$ ($\rho_l$) and $\phi_l$ ($r_l$) denote the physical DoA and direction-of-departure (DoD) angles (ranges) of the scattering paths between the user and the BS, respectively, where  $\theta_l = \sin \tilde{\theta}_l$, $\phi_l = \sin \tilde{\phi}_l$ and $\tilde{\theta}_l,\tilde{\phi}_l \in [-\frac{\pi}{2},\frac{\pi}{2}]$. Then, the corresponding receive and transmit steering vectors are defined as $\mathbf{a}_\mathrm{R}(\theta_l, \rho_l)\in \mathbb{C}^{N_\mathrm{R}}$ and $\mathbf{a}_\mathrm{T}(\phi_l,r_l)\in \mathbb{C}^{N_\mathrm{T}}$, respectively. Performing $M$-point FFT of the delay-$\bar{d}$ channel given in (\ref{channelTimeDomain}) yields 
	\begin{align}
	\mathbf{H}[m] = \sum_{\bar{d}=1}^{\bar{D}-1} \tilde{\mathbf{H}}(\bar{d}) e^{- \mathrm{j}\frac{2\pi m}{M} \bar{d} },
	\end{align} where $\bar{D}\leq M$ is the CP length.
	Then, the $N_\mathrm{R}\times N_\mathrm{T}$ channel matrix in frequency domain is represented by
	\begin{align}
	\label{channelFrequencyDomain}
	\mathbf{H}[m] \hspace{-3pt}=\hspace{-3pt} \sum_{l = 1}^{L} {\gamma}_l \mathbf{a}_\mathrm{R}(\bar{\theta}_{l,m},\bar{\rho}_{l,m}) \mathbf{a}_\mathrm{T}^\textsf{H}(\bar{\phi}_{l,m},\bar{r}_{l,m})e^{-\mathrm{j}2\pi \tau_l f_m},
	\end{align}
	where $\bar{\theta}_{l,m}$ ($\bar{\rho}_{l,m}$) and $\bar{\phi}_{l,m}$ $(\bar{r}_{l,m})$ denote the spatial directions (ranges), which are SD and they are deviated from the physical directions $\theta_l$, $\phi_l$ ($\rho_l,r_l$) in the beamspace due to beam-squint~\cite{beamSquint_FeiFei_Wang2019Oct,elbir2022Aug_THz_ISAC}. On the other hand, the beam-squint-free channel matrix is
	\begin{align}
	\overline{\mathbf{H}}[m] = \sum_{l = 1}^{L} {\gamma}_l \mathbf{a}_\mathrm{R}(\phi_{l},\rho_l) \mathbf{a}_\mathrm{T}^\textsf{H}(\theta_{l},r_l)e^{-\mathrm{j}2\pi \tau_l f_m}.
	\end{align}Then, the $N_\mathrm{R}\times 1$ received signal at the communications user is
	\begin{align}
	\mathbf{y}[m] = \mathbf{H}[m]\mathbf{F}_\mathrm{RF}\mathbf{F}_\mathrm{BB}[m]\mathbf{s}[m] + \mathbf{n}[m],
	\end{align}
	where $\mathbf{n}[m]\sim \mathcal{CN}(\mathbf{0},\sigma_n^2\mathbf{I}_{N_\mathrm{R}})\in\mathbb{C}^{N_\mathrm{R}}$ represents the temporarily and spatially  additive white Gaussian noise vector.

	\subsubsection{Beam-Squint Effect}
	\label{sec:beamSplit} 

	In wideband transmission, 
	the prevalent assumption is the employment of a monochromatic wavelength across all subcarriers, delineated as $\lambda_1 = \cdots \lambda_M = \frac{c_0}{f_c}$, wherein $c_0$ signifies the speed of light and $f_c$ represents the carrier frequency. However, 
	the utilization of a singular analog beamformer renders this monochromatic wavelength assumption inapplicable, culminating in the formation of squinted beams that orient toward disparate spatial directions and ranges~\cite{beamSquint_FeiFei_Wang2019Oct,elbir2022Aug_THz_ISAC}. 
	Presuming an analogous beamforming architecture is employed at the user end(i.e., SI analog beamformer with SD digital beamformers), 
	the high-frequency operation at THz implies the the presence of close-proximity users in the near-field region, where planar wave propagation is not valid. At ranges shorter than the Fraunhofer distance $d_F = \frac{2 D^2}{\lambda}$, where $D$ is the array aperture and $\lambda = \frac{c_0}{f_c}$ is the wavelength, the near-field wavefront exhibits spherical nature \cite{nf_primer_Bjornson2021Oct,elbir_THZ_CE_ArrayPerturbation_Elbir2022Aug}.  For a uniform linear array (ULA), the array aperture is $D = (N-1)d$, where $d = \frac{\lambda}{2}$ is the element spacing. In the THz spectrum, it is imperative to employ a near-field signal model because $r_{l} <d_F$. For instance, when $f_c = 300$ GHz and $N=256$, the Fraunhofer distance is $d_F = 32.76$ m.
	
	Taking into account the spherical-wave model~\cite{nf_primer_Bjornson2021Oct,nf_Fresnel_Cui2022Nov,elbir2023Feb_NF_THZ_CE}, we define the near-field steering vector $\mathbf{a}_\mathrm{T}(\phi_{l},r_{l})\in\mathbb{C}^{N_\mathrm{T}}$ corresponding to the physical DoA  $\phi_{l}$ and range $r_{l}$ as 
	\begin{align}
	\label{steeringVec1}
	\mathbf{a}_\mathrm{T}(\phi_{l},r_{l}) = \frac{1}{\sqrt{N_\mathrm{T}}} [e^{- \mathrm{j}2\pi \frac{d}{\lambda}r_{l}^{(1)} },\cdots,e^{- \mathrm{j}2\pi \frac{d}{\lambda}r_{l}^{(N_\mathrm{T})} }]^\textsf{T},
	\end{align}
	where  $r_{l}^{(n)}$ is the distance between the $l$-th path scatterer and the $n$-th antenna as
	\begin{align}
	r_{l}^{(n)} = \left(r_{l}^2  + 2(n-1)^2 d^2 - 2 r_{l}(n-1) d \phi_{l}   \right)^{\frac{1}{2}}. \label{eq:rkln}
	\end{align}
	Following the Fresnel approximation~\cite{nf_Fresnel_Cui2022Nov,elbir2023Feb_NF_THZ_CE}, \eqref{eq:rkln} becomes
	\begin{align}
	\label{r_approx}
	r_{l}^{(n)} \approx r_{l}  - (n-1) d \phi_{l}  + (n-1)^2 d^2 \zeta_{l}  ,
	\end{align}	 
	where $\zeta_{l} = \frac{1- \phi_{l}^2}{2 r_{l}}$. Rewrite (\ref{steeringVec1}) as
	\begin{align}
	\label{steeringVectorPhy}
	\mathbf{a}_\mathrm{T}(\phi_{l},r_{l}) \approx e^{- \mathrm{j}2\pi \frac{f_c}{c_0}r_{l}} \tilde{\mathbf{a}}_\mathrm{T}(\phi_{l},r_{l}),
	\end{align} where the $n$-th element of $\tilde{\mathbf{a}}_\mathrm{T}(\phi_{l},r_{l})\in \mathbb{C}^{N_\mathrm{T}}$ is 
	\begin{align}
	\label{steeringVectorPhy2}
	[\tilde{\mathbf{a}}_\mathrm{T}(\phi_{l},r_{l})]_n = e^{\mathrm{j} 2\pi \frac{f_c}{c_0}\left( (n-1)d\phi_{l}  - (n-1)^2 d^2 \zeta_{l}\right) }.
	\end{align}
	The steering vector in (\ref{steeringVectorPhy}) corresponds to the physical location $(\phi_{l},r_{l})$. This deviates to the spatial location $(\bar{\phi}_{m,l},\bar{r}_{m,l})$ in the beamspace because of the absence of SD analog beamformers. Then, the $n$-th entry of the deviated steering vector in (\ref{steeringVectorPhy2}) for the spatial location is 
	\begin{align}
	\label{steeringVectorSpa}
	&[\tilde{\mathbf{a}}_\mathrm{T}(\bar{\phi}_{m,l},\bar{r}_{m,l})]_n \hspace{-3pt}= \hspace{-2pt}e^{\mathrm{j} 2\pi \frac{f_m}{c_0}\left( (n-1)d\bar{\phi}_{m,l}  - (n-1)^2 d^2 \bar{\zeta}_{m,l}\right) }.
	\end{align}

	%

	\begin{theorem}
		Denote $\mathbf{u}\in \mathbb{C}^{N_\mathrm{T}} $ and $\mathbf{v}_m \in \mathbb{C}^{N_\mathrm{T}}$ as the arbitrary near-field steering vectors corresponding to the physical (i.e., $\{\phi_{l},r_{l}\}$) and spatial (i.e., $\{\bar{\phi}_{m,l},	\bar{r}_{m,l}\}$) locations given in (\ref{steeringVectorPhy2}) and (\ref{steeringVectorSpa}), respectively. Then, in spatial domain at subcarrier frequency $f_m$, the array gain achieved by $\mathbf{u}^\textsf{H}\mathbf{v}_m$ is maximized and the generated beam is focused at the location $\{\bar{\phi}_{m,l},	\bar{r}_{m,l}\}$ such that  
		\begin{align}
		\label{physical_spatial_directions}
		\bar{\phi}_{m,l} =    \eta_m \phi_{l}, \hspace{5pt}
		\bar{r}_{m,l} =    \frac{1 - \eta_m^2 \phi_{l}^2}{\eta_m(1 -\phi_{l}^2)}r_{l},
		\end{align}
		where  	 $\eta_m = \frac{f_c}{f_m}$ represents the proportional deviation of DoA/ranges.
	\end{theorem}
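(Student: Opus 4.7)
The plan is to expand the inner product $\mathbf{u}^{\textsf{H}}\mathbf{v}_m$ using the explicit entrywise formulas in (\ref{steeringVectorPhy2}) and (\ref{steeringVectorSpa}), observe that it is a sum of $N_\mathrm{T}$ unit-modulus complex exponentials whose phases are a quadratic polynomial in the antenna index $n$, and then invoke the triangle inequality to extract the alignment conditions. Concretely, the $n$-th summand of $\mathbf{u}^{\textsf{H}}\mathbf{v}_m$ equals $\exp\!\bigl\{\mathrm{j}2\pi c_0^{-1}\bigl[(n-1)d(f_m\bar{\phi}_{m,l}-f_c\phi_l) - (n-1)^2 d^2(f_m\bar{\zeta}_{m,l}-f_c\zeta_l)\bigr]\bigr\}$, where $\bar{\zeta}_{m,l}=(1-\bar{\phi}_{m,l}^2)/(2\bar{r}_{m,l})$ and $\zeta_l=(1-\phi_l^2)/(2r_l)$. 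Writing $\alpha := f_m\bar{\phi}_{m,l}-f_c\phi_l$ and $\beta := f_m\bar{\zeta}_{m,l}-f_c\zeta_l$, the sum has magnitude at most $N_\mathrm{T}$, attained exactly when the quadratic $\alpha(n-1)d - \beta(n-1)^2 d^2$ is constant across $n=1,\ldots,N_\mathrm{T}$.

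Next, I would argue that for $N_\mathrm{T}\geq 3$ this constancy forces both coefficients to vanish: since a nontrivial real polynomial of degree at most two cannot take the same value at three distinct integer arguments, we must have $\alpha = \beta = 0$. The condition $\alpha=0$ immediately yields the first claim $\bar{\phi}_{m,l} = (f_c/f_m)\phi_l = \eta_m\phi_l$. Substituting this into $\beta=0$, i.e., into $f_m(1-\bar{\phi}_{m,l}^2)/(2\bar{r}_{m,l}) = f_c(1-\phi_l^2)/(2r_l)$, and solving for $\bar{r}_{m,l}$ gives $\bar{r}_{m,l} = (1-\eta_m^2\phi_l^2)\,r_l\,/\,[\eta_m(1-\phi_l^2)]$, which is exactly (\ref{physical_spatial_directions}).

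The only nontrivial part is the uniqueness step: strictly speaking, perfect alignment requires the phase polynomial to be constant modulo $2\pi$ rather than identically constant, so one could in principle allow $\alpha,\beta$ to induce $2\pi$-periodic jumps across integer $n$. I would dispense with this by noting that the per-step linear and quadratic increments scale as $d/\lambda_m = 1/2$ times small spatial shifts, which for any fixed $N_\mathrm{T}$ remain well below $2\pi$ in the relevant near-field regime; equivalently, $\bar{\phi}_{m,l}$ and $\bar{\zeta}_{m,l}$ are recovered as the unique minimizers over a compact, bounded parameter set. This makes $\alpha=\beta=0$ the only admissible maximizer, completing the characterization of the focused beam location at subcarrier $m$.
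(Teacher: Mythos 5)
The paper offers no proof of its own for this theorem --- it simply defers to the external reference \cite{elbir2023Feb_NF_THZ_CE} --- so there is nothing in the manuscript to compare against line by line; your self-contained derivation is in fact more than the paper provides, and it is correct. It is also the standard phase-alignment argument used in that line of work: expand $\mathbf{u}^\textsf{H}\mathbf{v}_m$ as a sum of $N_\mathrm{T}$ unit-modulus exponentials whose phase is the quadratic $\frac{2\pi}{c_0}\bigl[(n-1)d\,\alpha-(n-1)^2d^2\beta\bigr]$ with $\alpha=f_m\bar{\phi}_{m,l}-f_c\phi_l$ and $\beta=f_m\bar{\zeta}_{m,l}-f_c\zeta_l$, bound the magnitude by $N_\mathrm{T}$ via the triangle inequality, and observe that equality forces $\alpha=\beta=0$ once $N_\mathrm{T}\geq 3$. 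The algebra checks out: $\alpha=0$ gives $\bar{\phi}_{m,l}=\eta_m\phi_l$, and $\beta=0$ together with $f_m/f_c=\eta_m^{-1}$ gives $\bar{r}_{m,l}=\frac{1-\eta_m^2\phi_l^2}{\eta_m(1-\phi_l^2)}r_l$, which is exactly \eqref{physical_spatial_directions}. You were right to single out the mod-$2\pi$ issue as the only delicate step; your resolution is acceptable, since with $d=\lambda/2$ the per-antenna linear phase increment is $\pi(\eta_m^{-1}\bar{\phi}_{m,l}-\phi_l)$, which remains inside $(-2\pi,2\pi)$ for directions in the visible region except at the degenerate edge $|\phi_l|=|\bar{\phi}_{m,l}|=1$, and the quadratic (Fresnel) increment is smaller still for ranges beyond a few array apertures --- this is the usual exclusion of grating-lobe ambiguities. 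One cosmetic nit: $d/\lambda_m=f_m/(2f_c)$, not exactly $1/2$; this does not affect the conclusion.
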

	
	\begin{proof}
		Please see~\cite{elbir2023Feb_NF_THZ_CE}.
	\end{proof}

	\vspace{-5pt}
	
	Following (\ref{r_approx}) and (\ref{physical_spatial_directions}),  we define near-field beam-squint in terms of DoAs and ranges as, respectively,
	\begin{align}
	\label{beamSplit2}
	\Delta(\phi_{l},m) &= \bar{\phi}_{m,l} - \phi_{l} = (\eta_m -1)\phi_{l}, 
	\end{align}
	and $\Delta(r_{l},m) = \bar{r}_{m,l} - r_{l} = (\eta_m -1)r_{l}$, i.e., 
	\begin{align}
	\Delta(r_{l},m) =  (\eta_m -1) \frac{1 - \eta_m^2 \phi_{l}^2}{\eta_m(1 -\phi_{l}^2)}r_{l}.
	\end{align}

		\vspace{-15pt}
	\subsection{Radar Model}
	The aim of the radar sensing task is to achieve the highest SNR toward targets. Denote the estimate of the $k$-th  target direction and range by $\Phi_k$ and ${r}_k$, which can be estimated during the search phase of the radar, e.g., MUSIC algorithm~\cite{elbir2023Mar_ISAC_SPIM}. Then, we select the radar-only beamformer as
	\begin{align}
	\mathbf{F}_\mathrm{R} =  [\mathbf{a}_\mathrm{T}(\Phi_1,{r}_1),\cdots, \mathbf{a}_\mathrm{T}(\Phi_K,{r}_K)]\in \mathbb{C}^{N_\mathrm{T}\times K}.
	\end{align}
		The proposed ISAC beamformer aims to generate multiple beams toward both radar targets and the communication user. This allows us to maintain the communication between the user and the BS while tracking the radar targets, of which the initial directions/ranges are estimated.	Using the hybrid beamforming structure, the beampattern of the radar for $\Phi \in [-\frac{\pi}{2},\frac{\pi}{2}]$ and  $r\in [0, d_F]$ is 
	\begin{align}
	B_m(\Phi,r) = \mathrm{Trace}\{\mathbf{a}_\mathrm{T}^\textsf{H}(\Phi,r)\mathbf{R}_\mathbf{x}[m] \mathbf{a}_\mathrm{T}(\Phi, {r})  \},\label{beamPattern}
	\end{align}
	where $\mathbf{a}_\mathrm{T}(\Phi,r)\in\mathbb{C}^{N_\mathrm{T}}$ denotes the steering vector corresponding to arbitrary  direction $\Phi$ and range $r$, and  $\mathbf{R}_\mathbf{x}[m]\in \mathbb{C}^{N_\mathrm{T}\times N_\mathrm{T}}$ is the covariance of the transmit signal as 
	\begin{align}
	\mathbf{R}_\mathbf{x}[m] &= \mathbb{E}\{ \mathbf{x}[m]\mathbf{x}^{\textsf{H}}[m] \}
	\nonumber\\	
	& = \mathbf{F}_\mathrm{RF}\mathbf{F}_\mathrm{BB}[m]\mathbb{E}\{\mathbf{s}[m]\mathbf{s}^\textsf{H}[m]\} \mathbf{F}_\mathrm{BB}^{\textsf{H}}[m]\mathbf{F}_\mathrm{RF}^{\textsf{H}} 
	\nonumber \\	&
	= \frac{1}{N_\mathrm{S}}\mathbf{F}_\mathrm{RF}\mathbf{F}_\mathrm{BB}[m] \mathbf{F}_\mathrm{BB}^{\textsf{H}}[m]\mathbf{F}_\mathrm{RF}^{\textsf{H}}.
	\end{align}
	To simultaneously obtain the desired beampattern for the radar target and provide satisfactory communications performance, the hybrid beamformer $\mathbf{F}_\mathrm{RF}\mathbf{F}_\mathrm{BB}[m]$ should be designed accordingly.
	

	\subsection{Problem Formulation}
	Our aim in this work is to design the ISAC hybrid beamformer $\mathbf{F}_\mathrm{RF}\mathbf{F}_\mathrm{BB}[m]$ while mitigating the impact of near-field beam-squint. The design problem maximizes the SE of the overall system, which can be recast via minimizing the Euclidean distance between the hybrid beamformer $\mathbf{F}_\mathrm{RF}\mathbf{F}_\mathrm{BB}[m]$ and the unconstrained JRC beamformer $\mathbf{F}_\mathrm{CR}[m]$~\cite{heath2016overview,elbir2021JointRadarComm}. The JRC beamformer is defined as 
	\begin{align}
	\label{Fcr}
	\mathbf{F}_\mathrm{CR}[m] = \varepsilon \mathbf{F}_\mathrm{opt}[m] + (1- \varepsilon) \mathbf{F}_\mathrm{R}\boldsymbol{\Pi}[m],
	\end{align} 
	where $\mathbf{F}_\mathrm{opt}[m]\in\mathbb{C}^{N_\mathrm{T}\times N_\mathrm{S}}$ is the unconstrained communications-only beamformer, which can be obtained through the singular value decomposition (SVD) of $\mathbf{H}[m]$~\cite{heath2016overview}.   $\boldsymbol{\Pi}[m]\in\mathbb{C}^{K\times N_\mathrm{S}}$ is a unitary matrix providing the change of dimensions between $\mathbf{F}_\mathrm{R}$ and $\mathbf{F}_\mathrm{opt}[m]$. In (\ref{Fcr}), $0\leq \varepsilon\leq 1$ represents  the trade-off parameter between the radar and communications tasks. In particular, $\varepsilon=1$ ($\varepsilon = 0$) corresponds to the communications-only (radar-only) design. In ISAC, $\varepsilon$ controls	the trade-off between the accuracy/prominence of sensing and communications tasks~\cite{elbir2022Aug_THz_ISAC}. 
	Now, the optimization problem becomes
	\begin{align}
	&\minimize_{\mathbf{F}_\mathrm{RF}, \mathbf{F}_\mathrm{BB}[m], \boldsymbol{\Pi} [m]} \; \sum_{m = 1}^{M} \| \mathbf{F}_\mathrm{RF}\mathbf{F}_\mathrm{BB}[m] -  \mathbf{F}_\mathrm{CR}[m] \|_\mathcal{F} \nonumber \\
	&\hspace{20pt} \subjectto  | [\mathbf{F}_\mathrm{RF}]_{i,j} | =1/\sqrt{N_\mathrm{T}} \nonumber\\
	&\hspace{20pt} \sum_{m = 1}^{M}\| \mathbf{F}_\mathrm{RF}{\mathbf{F}}_\mathrm{BB}[m] \|_\mathcal{F} = MN_\mathrm{S} \nonumber \\
	& \hspace{20pt} \boldsymbol{\Pi}[m]{ \boldsymbol{\Pi}}^\textsf{H}[m] = \mathbf{I}_{K}. \label{problemOpt}
	\end{align}
	The above optimization problem is difficult to solve due to non-convex constraints, e.g., unit-modulus constraint and it involves multiple unknowns   $\mathbf{F}_\mathrm{RF}$, $\mathbf{F}_\mathrm{BB}[m]$ and $\boldsymbol{\Pi}[m]$. In order to provide an effective solution, we follow an alternating optimization approach, wherein the beamformers are optimized one-by-one while the other term is fixed. Specifically, $\mathbf{F}_\mathrm{RF}$ is first estimated via an OMP-based approach, wherein the columns of $\mathbf{F}_\mathrm{RF}$ are selected from a dictionary of near-field steering vectors. Next, the baseband beamformer   $\mathbf{F}_\mathrm{BB}[m]$ and $\boldsymbol{\Pi}[m]$ are estimated. Finally, a BSA baseband beamformer is designed for beam-squint compensation.


	\section{Hybrid Beamformer Design}
	\label{sec:SPIMISAC}
	In order to solve (\ref{problemOpt}) effectively, we propose an alternating algorithm to efficiently find the unknowns $\mathbf{F}_\mathrm{RF}, \mathbf{F}_\mathrm{BB}[m], \boldsymbol{\Pi} [m]$. Thus, we first introduce an OMP based approach, wherein the analog beamformer $\mathbf{F}_\mathrm{RF}$ is designed, respectively, from the columns of the  dictionary matrix
	\begin{align}
	\mathbf{D}= [\mathbf{a}_\mathrm{T}(\phi_1,r_1),\cdots, \mathbf{a}_\mathrm{T}(\phi_N,r_N)]\in \mathbb{C}^{N_\mathrm{T}\times N},
	\end{align}
	where $N$ is the grid size of the dictionary with $\phi_n \in [-1,1]$, $r_n\in (0, d_F]$.
	Then, the columns of the analog beamformer $\mathbf{F}_\mathrm{RF}$ are selected from the columns of $\mathbf{{\mathbf{D}}}[m]$ as $\mathbf{a}_\mathrm{T}(\phi_{p^*},r_{p^*})$, for $\ell = 1,\cdots, N_\mathrm{RF}$  where
	\begin{align}
	p^\star = \argmax_{p \in \{1,\cdots, N\}} \sum_{m=1}^{M}\left|\left[\boldsymbol{\Psi}[m]\boldsymbol{\Psi}^\textsf{H}[m]\right]_{p,p} \right|,
	\end{align}
	where  	$\boldsymbol{\Psi}[m] = \mathbf{a}_\mathrm{T}^\textsf{H}(\phi_p,r_p) \mathbf{F}_\mathrm{CR}[m]$.
	Once  the analog beamformer $\mathbf{F}_\mathrm{RF}$ is obtained and by using  $\mathbf{F}_\mathrm{CR}[m]$, the baseband beamformer  is given by
	\begin{align}
	\mathbf{F}_\mathrm{BB}[m] = {\mathbf{F}_\mathrm{RF}}^\dagger \mathbf{F}_\mathrm{CR}[m],
	\end{align}
	which is then normalized as $\mathbf{F}_\mathrm{BB}[m] = \frac{\sqrt{N_\mathrm{S}} \mathbf{F}_\mathrm{RF}^\dagger \mathbf{F}_\mathrm{CR}[m]  }{\|\mathbf{F}_\mathrm{RF}\mathbf{F}_\mathrm{BB}[m]    \|_\mathcal{F}}$. The JRC beamformer is composed of the auxiliary matrix $\boldsymbol{\Pi}[m]$, which can be optimized as 
	\begin{align}
	\label{prob_FBB_P}
	&\minimize_{\overline{\boldsymbol{\Pi}}} \hspace{3pt} \|\mathbf{F}_\mathrm{RF}\overline{\mathbf{F}}_\mathrm{BB} - {\mathbf{F}}_\mathrm{CR}   \|_\mathcal{F}^2  \nonumber \\
	&	\subjectto \hspace{5pt} \overline{\boldsymbol{\Pi}}\; \overline{\boldsymbol{\Pi}}^\textsf{H} = \mathbf{I}_K,
	\end{align}
	where $\overline{\mathbf{F}}_\mathrm{BB} = \left[ \mathbf{F}_\mathrm{BB}[1], \cdots, \mathbf{F}_\mathrm{BB}[M] \right]$, $\overline{\mathbf{F}}_\mathrm{CR} = \left[ \mathbf{F}_\mathrm{CR}[1], \cdots, \mathbf{F}_\mathrm{CR}[M] \right]$ and $\overline{\boldsymbol{\Pi}} = \left[\boldsymbol{\Pi}[1],\cdots, \boldsymbol{\Pi}[M] \right]$ are $N_\mathrm{RF}\times MN_\mathrm{S}$, $N_\mathrm{T}\times MN_\mathrm{S}$ and $K\times MN_\mathrm{S}$ matrices composed of information corresponding to all subcarriers, respectively. The solution to the problem in (\ref{prob_FBB_P}) can be found via SVD of the $K\times MN_\mathrm{S}$ matrix $\mathbf{F}_\mathrm{R}^\textsf{H} \mathbf{F}_\mathrm{RF} \overline{\mathbf{F}}_\mathrm{BB}$ and it is given by
	\begin{align}
	\label{pi_m}
	\overline{\boldsymbol{\Pi}} = \widetilde{\boldsymbol{\Pi}} \mathbf{I}_{K\times MN_\mathrm{S}} \widetilde{\mathbf{V}},
	\end{align}
	where $\widetilde{\boldsymbol{\Pi}} \widetilde{\boldsymbol{\Sigma}} \widetilde{\mathbf{V}}  =  \mathbf{F}_\mathrm{R}^\textsf{H} \mathbf{F}_\mathrm{RF} \overline{\mathbf{F}}_\mathrm{BB}$ is the SVD of the $N_\mathrm{RF}\times N_\mathrm{S}$ matrix $\frac{1}{1- \varepsilon }\mathbf{F}_\mathrm{R}^\textsf{H} \left(\mathbf{F}_\mathrm{RF} \overline{\mathbf{F}}_\mathrm{BB} -  \varepsilon \overline{\mathbf{F}}_\mathrm{CR}\right)$, and $\mathbf{I}_{K\times MN_\mathrm{S}}  = \left[\mathbf{I}_K \hspace{1pt}|  \hspace{1pt} \mathbf{0}_{ MN_\mathrm{S}- K\times K}^\textsf{T}  \right]^\textsf{T}$. Then, by estimating $\mathbf{F}_\mathrm{BB}[m]$ and $\boldsymbol{\Pi}[m]$ iteratively, the hybrid beamformer weights are computed.

	\begin{algorithm}[t]
		\begin{algorithmic}[1] 
			\caption{ \bf ISAC hybrid beamforming}
			\Statex {\textbf{Input:} $\mathbf{D}$, $\mathbf{F}_\mathrm{R}$, $\mathbf{F}_\mathrm{opt}[m]$, $\varepsilon$, $\eta_m$. \label{alg:BSAHB}}
			\State  $\mathbf{F}_\mathrm{RF} = \mathrm{Empty}$, $\mathbf{F}_\mathrm{res}[m] =\mathbf{F}_\mathrm{CR}[m]$.
			\State \textbf{for} $\ell =1,\cdots, N_\mathrm{RF}$ \textbf{do}
			\State \indent  $p^\star = \argmax_{p} \sum_{m=1}^{M}\left|\mathbf{a}_\mathrm{T}^\textsf{H}(\phi_p,r_p) \mathbf{F}_\mathrm{res}[m] \right|$.
			\State \indent $\mathbf{F}_\mathrm{RF} =\left[\mathbf{F}_\mathrm{RF}| \mathbf{a}_\mathrm{T}(\phi_{p^*},r_{p^*}) \right]$.
			\State  \indent$\mathbf{F}_\mathrm{BB}[m] = (\mathbf{F}_\mathrm{RF}^\textsf{H}\mathbf{F}_\mathrm{RF})^{-1}\mathbf{F}_\mathrm{RF}^\textsf{H} \mathbf{F}_\mathrm{CR}[m].$
			\State \indent Update $\boldsymbol{\Pi}[m]$ from (\ref{pi_m}).
			\State \indent Update $\mathbf{F}_\mathrm{CR}[m]$ from (\ref{Fcr}).
			\State \indent $\mathbf{F}_\mathrm{res}[m] = 		\frac{ \mathbf{F}_\mathrm{CR}[m] - \mathbf{F}_\mathrm{RF}\mathbf{F}_\mathrm{BB}[m] }{ \| \mathbf{F}_\mathrm{CR}[m] - \mathbf{F}_\mathrm{RF}\mathbf{F}_\mathrm{BB}[m]  \|_\mathcal{F} }.$
			\State \textbf{end for}
			\State $\mathbf{F}_\mathrm{BB}[m] = 		\sqrt{N_\mathrm{S}}\frac{ \mathbf{F}_\mathrm{BB}[m] }{ \| \mathbf{F}_\mathrm{RF}\mathbf{F}_\mathrm{BB}[m]  \|_\mathcal{F} }.$
			\State  $\breve{\mathbf{F}}_\mathrm{RF}[m] = \frac{1}{\sqrt{N_\mathrm{T}}}   \boldsymbol{\Omega}[m]$ where $[\boldsymbol{\Omega}[m]]_{i,j} = \exp \{\mathrm{j} {\eta_m} \angle \{[\mathbf{F}_\mathrm{RF}]_{i,j} \}\}$.
			\State   $\widetilde{\mathbf{F}}_\mathrm{BB}[m] = \mathbf{F}_\mathrm{RF}^\dagger \breve{\mathbf{F}}_\mathrm{RF}[m] \mathbf{F}_\mathrm{BB}[m]$.
			
			\Statex \textbf{Return:} $\mathbf{F}_\mathrm{RF}$, $\widetilde{\mathbf{F}}_\mathrm{BB}[m]$.
		\end{algorithmic} 
	\end{algorithm}
	
	The next task is to mitigate  near-field beam-squint, which can be compensated if SD analog beamformers are used. However, this approach is costly since it requires employing $MN_\mathrm{T}N_\mathrm{RF}$ (instead of $N_\mathrm{T}N_\mathrm{RF}$) phase-shifters. Instead, we propose an efficient approach, wherein the effect of beam-squint is handled in the baseband beamformer, which is SD. Therefore, the effect of beam-squint is conveyed from analog domain to baseband.
	
	Denoted by $\breve{\mathbf{F}}_\mathrm{RF}[m]\in \mathbb{C}^{N_\mathrm{T}\times N_\mathrm{RF}}$,  the SD analog beamformer that can be computed from the SI analog beamformer $\mathbf{F}_\mathrm{RF}$ as 
	\begin{align}
	\breve{\mathbf{F}}_\mathrm{RF}[m] = \frac{1}{\sqrt{N_\mathrm{T}}}   \boldsymbol{\Omega}[m],
	\end{align}
	where $\boldsymbol{\Omega}[m]\in \mathbb{C}^{N_\mathrm{T}\times N_\mathrm{RF}}$ includes the angle information of $\mathbf{F}_\mathrm{RF}$ as $[\boldsymbol{\Omega}[m]]_{i,j} = \exp \{\mathrm{j} \eta_m \angle \{[\mathbf{F}_\mathrm{RF}]_{i,j} \}\}$ for $i = 1,\cdots, N_\mathrm{T}$ and $j =1,\cdots, N_\mathrm{RF}$. As a result, the angular deviation in $\mathbf{F}_\mathrm{RF}$ due to beam-squint is compensated  with $\eta_m$. 
	
	Now, we define $\widetilde{\mathbf{F}}_\mathrm{BB}[m]\in \mathbb{C}^{N_\mathrm{RF}\times N_\mathrm{S}}$ as the \textit{BSA digital beamformer} in order to achieve SD beamforming performance that can be obtained by the usage of SD analog beamformer $\breve{\mathbf{F}}_\mathrm{RF}[m]$. Hence, we aim to match the proposed \textit{BSA hybrid beamformer} $\mathbf{F}_\mathrm{RF} \widetilde{\mathbf{F}}_\mathrm{BB}[m]$ with the SD hybrid beamformer $\breve{\mathbf{F}}_\mathrm{RF}[m] \mathbf{F}_\mathrm{BB}[m] $ as
	\begin{align}
	\minimize_{\widetilde{\mathbf{F}}_\mathrm{BB}[m]} \| \mathbf{F}_\mathrm{RF} \widetilde{\mathbf{F}}_\mathrm{BB}[m] - \breve{\mathbf{F}}_\mathrm{RF}[m] \mathbf{F}_\mathrm{BB}[m] \|_\mathcal{F},
	\end{align}
	for which $\widetilde{\mathbf{F}}_\mathrm{BB}[m]$ can be obtained as
	\begin{align}
	\label{fbbTilde}
	\widetilde{\mathbf{F}}_\mathrm{BB}[m] = {\mathbf{F}_\mathrm{RF}}^\dagger \breve{\mathbf{F}}_\mathrm{RF}[m] \mathbf{F}_\mathrm{BB}[m].
	\end{align}
	Because of the reduced dimension of the baseband beamformer (i.e., $N_\mathrm{RF}< N_\mathrm{T}$), the BSA approach does not completely mitigate beam-squint. In other words,  the beam-squint can be fully mitigated only if  ${\mathbf{F}_\mathrm{RF}}^\dagger \breve{\mathbf{F}}_\mathrm{RF}[m]  = \mathbf{I}_{\mathrm{N}_\mathrm{T}}$ so that the resulting hybrid beamformer $\mathbf{F}_\mathrm{RF} \widetilde{\mathbf{F}}_\mathrm{BB}[m]$ can be equal to  $\breve{\mathbf{F}}_\mathrm{RF}[m] \mathbf{F}_\mathrm{BB}[m]$,  which requires $N_\mathrm{RF} = N_\mathrm{T}$. Nevertheless, the proposed approach provides satisfactory SE performance with beam-squint compensation for a wide range of bandwidth~\cite{elbir2023Mar_ISAC_SPIM,elbir2022_thz_beamforming_Unified_Elbir2022Sep}. Finally, the algorithmic steps of the proposed hybrid beamforming approach are  presented in Algorithm~\ref{alg:BSAHB}, wherein we select the columns of the analog beamformer $\mathbf{F}_\mathrm{RF}$ from the near-field dictionary $\mathbf{D}$ for $\ell = 1,\cdots, N_\mathrm{RF}$. In this process, the similarity between the columns of $\mathbf{D}$ (i.e., $\mathbf{a}_\mathrm{T}(\phi_p,r_p)$) and the residual beamformer (i.e., $\mathbf{F}_\mathrm{res}[m]$) is performed. Since this process is iterated for $\ell = 1,\cdots, N_\mathrm{RF}$, its convergence is similar to the previous works~\cite{heath2016overview,mimoRHeath,mimoHybridLeus3}.
	
	\section{Numerical Experiments}
	\label{sec:Sim}
	We evaluated the performance of our hybrid beamforming technique in comparison with the fully digital (DF) ISAC and communications-only beamformers as well as far-field-based design, in terms of SE, averaged over $500$ Monte Carlo trials. The number of antennas at the BS and the user are $N_\mathrm{T}=128$ and $N_\mathrm{R}=16$, respectively. The  carrier frequency and the bandwidth are selected as $f_c = 300$ GHz and $B = 20$ GHz, respectively,
	and the number of subcarriers is $M=64$. The number of targets is $K=3$, number of spatial paths is $L=8$, number of RF chains is $N_\mathrm{RF} = 8$ and the trade-off parameter is $\varepsilon = 0.5$. The dictionary grid size is obtained from $N_\phi = 100$, $N_r = 20$, yielding $N = N_\phi N_r = 2000$. Targets and path directions (ranges) are uniformly drawn at random from the intervals $[-\frac{\pi}{3},\frac{\pi}{3}]$ ($[5, 30]$ m)~\cite{elbir2021JointRadarComm}. 
	
	\begin{figure}
		\centering
		{\includegraphics[draft=false,width=\columnwidth]{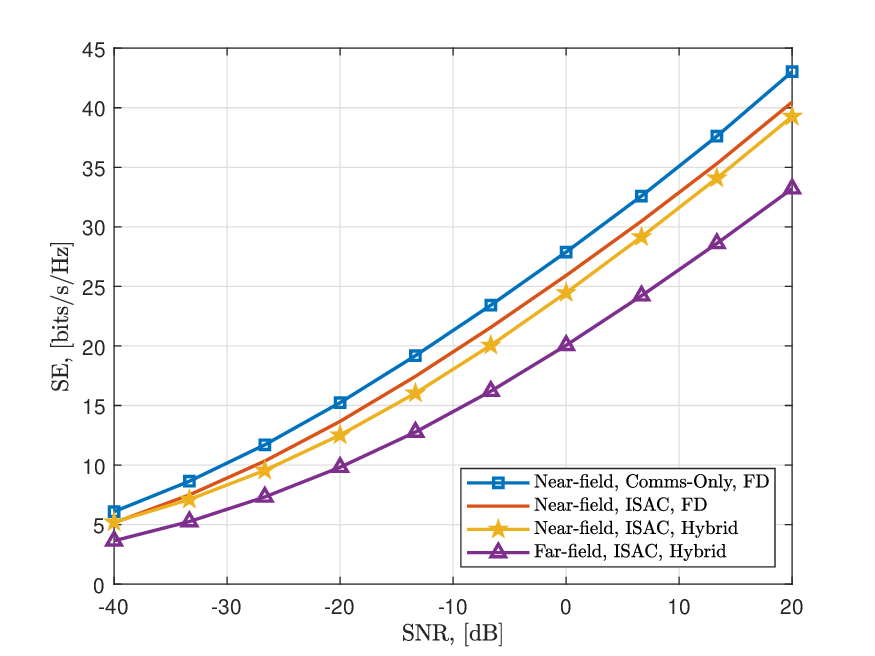} } 
		\caption{SE performance versus SNR.   	}
		\label{fig_SE_SNR}
	\end{figure}
	
	Fig.~\ref{fig_SE_SNR} delineates the SE performance of the competing algorithms. We can see that the communications-only ($\varepsilon = 1$) FD beamformer provides the highest SE while the ISAC ($\varepsilon = 0.5$) FD beamformer provides marginally reduced SE  due to power allocation for both communications and sensing tasks. The proposed beamforming approach exhibits performance closely resembling that of the FD beamformers. A significant performance degradation is also observed when the near-field  model is overlooked in favor of the far-field array model.
	
	\begin{figure}
		\centering
		{\includegraphics[draft=false,width=\columnwidth]{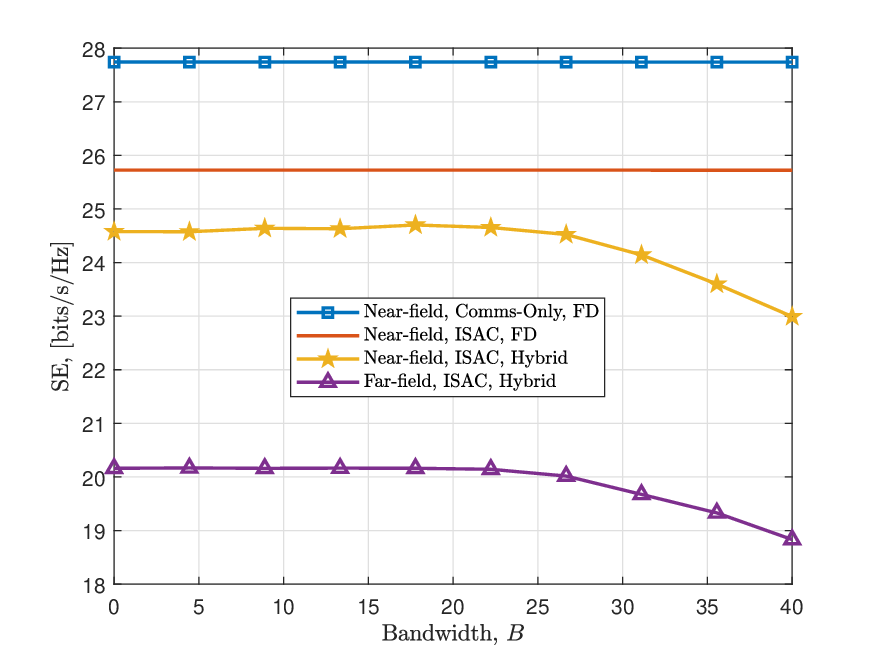} } 
		\caption{SE performance versus bandwidth.   	}
		\label{fig_SE_BW}
	\end{figure}
	
	Fig.~\ref{fig_SE_BW} shows the SE performance against the bandwidth $B\in [0,40]$ GHz. We can see that the proposed hybrid beamforming scheme achieves satisfactory SE performance up to $B \leq 30$ GHz, beyond which its performance slightly declines. This degradation arises because the BSA baseband beamformer's low-dimensional structure cannot adequately compensate for the beam-squint. Nevertheless, 
	the hybrid beamforming scheme closely trails the performance of the FD beamforming and yields a substantial SE improvement compared to the far-field model.
	
	\section{Summary}
	We introduced a hybrid beamforming scheme for THz-ISAC systems in near-field scenario. The analog beamformers are designed based on a dictionary composed of near-field steering vectors. Then, the baseband and JRC beamformers are obtained. In order to cope with beam-squint problem in near-field scenario, we utilized the baseband beamformer to convey the impact of beam-squint from analog domain to baseband without requiring additional hardware components. As future work, we reserve to study a more challenging scenario, e.g., near-field ISAC joint precoder and combiner design.

	\footnotesize
	\bibliographystyle{IEEEtran}
	\bibliography{IEEEabrv,references_121,references_119}

\end{document}